\documentclass[11pt,a4paper,oneside,onecolumn]{article}

\usepackage{cite}
\usepackage[english]{babel}
\usepackage[ansinew]{inputenc}
\usepackage{epsfig,float,latexsym,enumerate,amssymb,amsmath,amsfonts,color}
\usepackage{subfigure}

\setlength{\parindent}{0cm} %
\setlength{\parskip}{0.1cm} %
\setlength{\oddsidemargin}{0.5cm}
\setlength{\evensidemargin}{0.5cm}
\setlength{\textwidth}{15cm}
\setlength{\textheight}{24cm}%
\setlength{\topmargin}{-1cm}%

\newcommand{\x}[1]{x_{#1}}
\newcommand{\y}[1]{y_{#1}}
\def\eps{\varepsilon}

\newtheorem{theorem}{Theorem}[section]

\newtheorem{claim}[theorem]{Claim}

\newtheorem{corollary}[theorem]{Corollary}

\newtheorem{lemma}[theorem]{Lemma}

\newtheorem{proposition}[theorem]{Proposition}


\def\QED{\ensuremath{{\square}}}
\def\markatright#1{\leavevmode\unskip\nobreak\quad\hspace*{\fill}{#1}}

\newenvironment{proof}
{\begin{trivlist}\item[\hskip\labelsep{\bf Proof.}]}
{\markatright{\QED}\end{trivlist}}

\begin{document}

\title{Locating a single facility and a high-speed line}

\author{
	J. M. D\'{\i}az-B\'{a}\~{n}ez
	\thanks{Departamento Matem\'{a}tica Aplicada II, Universidad de Sevilla, Espa\~na. 
	{\tt \{dbanez,iventura\}@us.es}} \and 
	M. Korman
	\thanks{Universitat Polit\`{e}cnica de Catalunya (UPC), Barcelona, Espa\~na.
	{\tt mkormanc@ulb.ac.be}} \and
	P. P\'{e}rez-Lantero
	\thanks{Escuela de Ingenier\'ia Civil en Inform\'atica, Universidad de Valpara\'{i}so, Chile.
	{\tt pablo.perez@uv.cl}} \and
	I. Ventura\footnotemark[1]
} 

\maketitle

\begin{abstract}
In this paper we study a facility location problem in the plane in
which a single point (facility) and a rapid transit line (highway)
are simultaneously located in order to minimize the total travel
time from the clients to the facility, using the $L_1$ or Manhattan
metric. The rapid transit line is given by a segment with any length
and orientation, and is an alternative transportation line
that can be used by the clients to reduce their travel time to the
facility. We study the variant of the problem in which clients can
enter and exit the highway at any point. We provide an $O(n^3)$-time
algorithm that solves this variant, where $n$ is the number
of clients. We also present a detailed characterization of the solutions,
which depends on the speed given in the highway.
\end{abstract}
    

%
%

\section{Introduction}\label{section:intro}

Suppose we are given a set of clients represented as a set of points in
the plane, and a service facility represented as a point to which
all clients have to move. Every client can reach the facility
directly, or use an alternative rapid transit line called highway in order to reduce the
travel time. The highway is a straight line segment of arbitrary
orientation.
If a client moves
directly to the facility, it moves at unit speed and the distance
traveled is the Manhattan or $L_1$ distance to the facility. In the
case where a client uses the highway, it travels the $L_1$ distance
at unit speed to one point of the highway, traverses with a speed
$v>1$ the Euclidean distance to other highway point, and
finally travels the $L_1$ distance from that point to the facility
at unit speed. All clients traverse the highway at the same speed.
The highway is used by a client point whenever it saves time to reach
the facility.
Given the set of points representing the clients, the facility
location problem consists in determining at the same time the
facility point and the highway in order to minimize the total
weighted travel time from the clients to the facility. The weighted
travel time of a client is its travel time multiplied by a weight
representing the intensity of its demand.

Recent papers have dealt with geometric problems considering
travelling distances as a combination of planar and network distances.
Carrizosa and Rodr\'iguez-Ch\'ia~\cite{carrisoza1997} introduced the
$p$-facility min-sum location problem on the plane with a metric 
induced by a gauge and a finite set of rapid transit lines giving the network distance.
This problem was further developed by 
Gugat and Pfeiffer~\cite{gugat2007}, and Pfeiffer and Klamroth~\cite{pfeiffer2008}.
Brimberg et al.~\cite{brimberg2003,brimberg2005} studied 
the location of a new single facility
considering given regions of distinct distance measures.
All these papers consider the well-known
Weber problem under a new metric.
Other papers can be found in the context of the combination of the 
$L_1$ distance with the network distance.
Abellanas et al.~\cite{abellanas03}
introduced the time metric model: Given an
underlying metric, the user can travel at speed $\nu(h)$ when moving
along a highway $h$ or unit speed elsewhere. The particular case in
which the underlying metric is the $L_1$ metric and all highways are
axis-parallel segments of the same speed, is called the {\em city
metric}~\cite{aichholzer02}. 
In the scenario of setting up an optimal distance network
that minimizes the maximum travel time among a
set of points, several problems have been recently investigated in detail~\cite{ahn07,aloupis10,cardinal08}. Other similar and more general models
were studied by Korman and Tokuyama~\cite{korman08}.
See~\cite{phdkorman} and~\cite{dbanezMS04} for surveys on highway location 
and extensive facility location problems, respectively.

A similar problem consisting in simultaneously locating a
service facility point and a high-speed line (i.e.\ highway) 
of fixed length was recently studied by
Espejo and Rodr\'{i}guez-Ch\'{i}a~\cite{espejo11} and D\'iaz-B\'a\~nez et
al.~\cite{diaz-banez11}. The authors considered that highway is a 
\emph{turnpike}~\cite{korman08}, that is, clients can enter
and exit the highway only at the endpoints. 
A first solution was introduced by Espejo and Rodr\'{i}guez-Ch\'ia, and
after that an improved solution was given by D\'iaz-B\'a\~nez et al. 
The problem aims to minimize
the total weighted travel time from the demand points to the facility
service, and can be solved in $O(n^3)$ time~\cite{diaz-banez11}, where 
$n$ is the number of clients.
D\'iaz-B\'a\~nez et al.~\cite{diaz-banez11-3}
continued the study of this variant by considering the min-max
optimization criterion. They minimize the maximum time distance from
the clients to the facility point.

In this paper we study a related problem in which the length of the
highway is variable, that is, it is not fixed in advance as part
of the input of the problem, and clients can enter and
exit the highway at any point. Due to the latter condition, highway is called
\emph{freeway}~\cite{korman08}. Since both entering and leaving the highway are
allowed at all its points, then the structure (i.e.\ highway) is continuously
integrated in the plane.
We minimize the total weighted
transportation time from the demand points to the facility. The problem
of locating a min-max freeway of fixed length was solved by
D\'iaz-B\'a\~nez et al.~\cite{diaz-banez11-3} in $O(n\log n)$ time.

The following notation is introduced in order to formulate the problem.
Let $S$ be the set of $n$ demand points, $f$ be the service facility
point, $h$ be the highway, and $v>1$ be the speed in which demand
points move along $h$. Given a demand point $p$, $w_p>0$ denotes the
weight of $p$. The travel time between a demand point $p$ and the
service facility $f$, denoted by $d_h(p,f)$, is equal to:
\begin{equation}\label{eq1}
\min\left\{
\begin{array}{l}
\|p-f\|_1,\\
\min_{q_1,q_2\in
h}\left\{\|p-q_1\|_1+\frac{\|q_1-q_2\|_2}{v}+\|q_2-f\|_1\right\}
\end{array}
\right.
\end{equation}

The problem can be formulated as follows:

\medskip

{\bf The Freeway and Facility Location problem (FFL-problem)} Given
a set $S$ of $n$ demand points, the weight $w_p>0$ of each point $p$ of
$S$, and fixed speed $v>1$, locate the facility point $f$ and the
highway $h$ in order to minimize the next function:
\begin{equation}\label{eqnobj}
\Phi(f,h):=\sum_{p\in S}w_p \cdot d_h(p,f).
\end{equation}

\paragraph{Our results.}
We first show that there exist
optimal solutions of the FFL-problem in which the highway $h$ has
infinite length and the facility point $f$ is located on $h$. We
then consider only optimal solutions satisfying these properties. We
second show that for all demand points $p$ the shortest path
from $p$ to the facility point $f$ has one of three possible shapes:
(a) $p$ moves directly to $f$, (b) $p$ first moves vertically to
reach $h$ and after that moves along $h$ to reach $f$, and (c) $p$
first moves horizontally to reach $h$ and after that moves along $h$
to reach $f$. For each demand point, the shape of its shortest time
path to $f$ depends on both the speed $v$ in which demand points
moves along $h$ and the slope of $h$. This discretization on the
shortest path shapes allows us to simplify the expression of
$d_h(p,f)$ and then to obtain a clear
expression of the objective function $\Phi(f,h)$. 
Using geometric obervations, we reduce the search space of the optimal solutions. This is
done by considering
the grid $G$ defined by all axis-parallel lines passing through the
demand points. 
We prove the existence of optimal solutions $(f,h)$
that satisfy one of the following two properties: (1) $h$ passes
through a demand point and $f$ belongs to a line of grid $G$, and
(2) $f$ is a vertex of $G$. The discretization of the search space
permits us to obtain the main result of this paper, a general
$O(n^3)$-time algorithm to solve the FFL-problem. Our algorithm
divides the search into two cases that correspond to the above two
properties. As a surprising result, we prove that when speed $v$ is
greater than $\frac{3\sqrt{2}}{4}\approx 1.060660172$ the algorithm
can avoid the search of optimal solutions satisfying property (2)
because in that case there always exists an optimal solution which
holds property (1). This result simplifies the algorithm when speed
exceeds that bound. We finally present three examples,
two of them showing that when speed is increased and we keep the
same configuration of demand points the shapes of the shortest time
paths can change. A third example shows that when speed is less
than $\frac{3\sqrt{2}}{4}$, there exist configurations in which the
optimal solution satisfies property (2).

\paragraph{Outline.}
The discretization on the shapes of the shortest paths from the
demand points to the facility is stated in Section~\ref{section:preliminaries}.
In Section~\ref{section:discretization} we show how the search
space of optimal solutions can be reduced.
In Section~\ref{section:algorithm-free} the algorithm to solve 
the FFL-problem is presented and in Section~\ref{section:refinement} we give the
refinement of it. In Section~\ref{section:experimental}, the examples are presented.
Finally, in Section~\ref{section:conclu}, we present the conclusions
and further research.

\section{Discretization of the shortest paths}
\label{section:preliminaries}

Any solution to our problem will be encoded by a pair of elements
$(f,h)$, where $f$ is the facility point and $h$ is the highway.
Given $f$ and $h$, we say that a demand point $p$ does not use $h$
(or goes directly to $f$) if $d_{h}(p,f)$ is equal to $\|p-f\|_1$.
Otherwise we say that $p$ uses $h$. Given any point $u$ of the plane,
let $\x{u}$ and $\y{u}$ denote the $x-$ and $y-$coordinates of $u$,
respectively.
\begin{claim}\label{claim1}
Let $p_1$ and $p_2$ be demand points using the highway $h$ such that 
they move in contrary directions along $h$. Let segments $s_1,s_2\subseteq h$
denote the portions of $h$ traversed by $p_1$ and $p_2$, respectively.
Segments $s_1$ and $s_2$ have disjoint interiors.
\end{claim}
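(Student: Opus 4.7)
The plan is to prove the claim by contradiction via an exchange argument. Suppose, for the sake of contradiction, that the interiors of $s_1$ and $s_2$ overlap. For each $i\in\{1,2\}$, let $q_{i,1}$ and $q_{i,2}$ denote the entry and exit points of $p_i$'s shortest-time path on $h$, so that $s_i$ is the directed segment from $q_{i,1}$ to $q_{i,2}$. The hypothesis that $p_1$ and $p_2$ move in opposite directions along $h$ says precisely that these two directed segments point oppositely.

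I would parametrize $h$ by signed arc length, orienting it so that $p_1$ moves in the positive direction; then $q_{1,1}<q_{1,2}$ and $q_{2,2}<q_{2,1}$, and the overlap assumption becomes $\max\{q_{1,1},q_{2,2}\}<\min\{q_{1,2},q_{2,1}\}$. Because $h$ is a straight segment, the Euclidean distance between two points of $h$ equals the absolute difference of their parameters. Now consider the modified paths obtained by swapping only the exit points: $p_1$ still enters at $q_{1,1}$ but exits at $q_{2,2}$, while $p_2$ still enters at $q_{2,1}$ but exits at $q_{1,2}$ (entries and the facility $f$ are kept fixed). Each modified path is admissible, so optimality of the two original paths gives two inequalities whose sum cancels the Manhattan terms $\|p_i-q_{i,1}\|_1$ (identical on both sides) and $\|q_{i,2}-f\|_1$ (swapped but appearing in the same total), leaving
\begin{equation*}
(q_{1,2}-q_{1,1}) + (q_{2,1}-q_{2,2}) \;\leq\; |q_{2,2}-q_{1,1}| + |q_{1,2}-q_{2,1}|.
\end{equation*}

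The main obstacle is to argue that this inequality is incompatible with the overlap of interiors. I would close with a short case analysis on the relative order of the four numbers $q_{1,1},q_{1,2},q_{2,2},q_{2,1}$: subject to $q_{1,1}<q_{1,2}$, $q_{2,2}<q_{2,1}$ and the overlap condition, only a handful of orderings are possible, and in each of them the absolute values on the right-hand side open without ambiguity and the inequality simplifies to a condition that directly contradicts the overlap. For instance, in the ordering $q_{1,1}\le q_{2,2}<q_{1,2}\le q_{2,1}$ the inequality collapses to $q_{1,2}\le q_{2,2}$, contradicting $q_{2,2}<q_{1,2}$; the remaining orderings are handled identically. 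The resulting contradiction establishes the claim.
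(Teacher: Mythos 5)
Your proof is correct, but note that the paper itself states Claim~\ref{claim1} without any proof (it is treated as evident and used only inside the proof of Proposition~\ref{prop:fac-in-highway}), so there is no ``official'' argument to compare against; your exchange argument supplies the missing justification. The two optimality inequalities obtained by swapping exit points do sum to
$(q_{1,2}-q_{1,1})+(q_{2,1}-q_{2,2})\leq |q_{2,2}-q_{1,1}|+|q_{1,2}-q_{2,1}|$
after the $L_1$ terms cancel, and your concluding case analysis is sound: writing $a=q_{1,1}$, $b=q_{1,2}$, $c=q_{2,2}$, $d=q_{2,1}$ with $a<b$ and $c<d$, each of the four admissible orderings reduces the inequality to a statement contradicting $\max\{a,c\}<\min\{b,d\}$. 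You can in fact dispense with the cases entirely by observing the identity
$\bigl[(b-a)+(d-c)\bigr]-\bigl[|c-a|+|b-d|\bigr]=2\bigl(\min\{b,d\}-\max\{a,c\}\bigr)$,
so the summed inequality says directly that the (signed) length of the overlap is nonpositive, i.e.\ the interiors of $s_1$ and $s_2$ are disjoint. The only point worth making explicit in a final write-up is that the degenerate situation $q_{i,1}=q_{i,2}$ is trivial (an empty interior), and that ``$p_i$ uses $h$'' guarantees the original highway path attains the minimum in the definition of $d_h$, which is exactly what licenses the two optimality inequalities.
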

\newpage
\begin{proposition}\label{prop:fac-in-highway}
There exists an optimal solution of the
FFL-problem in which the facility point is located on the highway.
\end{proposition}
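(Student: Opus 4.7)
The plan is to begin with an arbitrary optimal solution $(f^*, h^*)$, first reduce to the case in which $h^*$ is an entire line, and then, assuming $f^* \notin h^*$, replace $h^*$ by its parallel translate $h'$ through $f^*$; I would argue that $(f^*, h')$ is optimal as well, which gives the desired configuration. For the reduction, I would observe that extending the segment $h^*$ to the full line $\ell^*$ supporting it only enlarges the set over which the inner minimization in~\eqref{eq1} is taken, so $d_{\ell^*}(p,f^*) \le d_{h^*}(p,f^*)$ pointwise and hence $\Phi(f^*,\ell^*) \le \Phi(f^*,h^*)$; we may therefore assume $h^* = \ell^*$. If $f^* \in \ell^*$ there is nothing to prove, so suppose otherwise, and define $h'$ as the unique line parallel to $\ell^*$ passing through $f^*$, i.e.\ $h' = \ell^* + (f^* - q^*)$ for any fixed $q^* \in \ell^*$.

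The heart of the argument is the pointwise inequality $d_{h'}(p,f^*) \le d_{\ell^*}(p,f^*)$ for every demand point $p$; summing with weights then gives $\Phi(f^*,h') \le \Phi(f^*,\ell^*)$ and thus optimality of $(f^*,h')$, which satisfies $f^* \in h'$. I would split on whether $p$ uses the highway in the original solution. If it does not, then \eqref{eq1} immediately yields $d_{h'}(p,f^*) \le \|p - f^*\|_1 = d_{\ell^*}(p,f^*)$. If it does, with optimal entry $q_1$ and exit $q_2$ on $\ell^*$, I would test the following strategy on $h'$: enter at $q_1' := q_1 + (f^* - q_2)$ and exit at $f^*$ itself. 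Since $q_1 - q_2$ is parallel to $\ell^*$, one checks that $q_1' \in h'$; moreover $q_1' - f^* = q_1 - q_2$, so the on-highway portion has the same Euclidean length as before. The $L_1$ triangle inequality then bounds the entry leg by
\[
\|p - q_1'\|_1 = \|(p - q_1) + (q_2 - f^*)\|_1 \le \|p - q_1\|_1 + \|q_2 - f^*\|_1,
\]
and since the exit leg contributes zero, the total cost of this new strategy on $h'$ is at most the original $d_{\ell^*}(p,f^*)$.

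The main obstacle is choosing what to move. The tempting first attempt of sliding $f^*$ onto $h^*$ fails: a demand point that previously travelled directly to $f^*$ can then be driven considerably farther, easily offsetting any saving for highway users (for instance, take $p$ close to $f^*$ with both far from $h^*$). Translating the highway rather than the facility sidesteps this difficulty, because users of $\ell^*$ can mimic their old path on $h'$ and fold the old exit leg into the entry leg via the $L_1$ triangle inequality, while points that went directly always retain the option $d_{h'}(p,f^*) \le \|p - f^*\|_1$ inherited from~\eqref{eq1}.
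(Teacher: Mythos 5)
Your proof is correct and follows essentially the same route as the paper's: translate the highway to pass through $f^*$, let each former highway user enter at $q_1+(f^*-q_2)$ and exit at $f^*$, and absorb the old exit leg into the entry leg via the $L_1$ triangle inequality. The only difference is that you first extend the segment to its supporting line, which lets you bypass the paper's Claim~\ref{claim1} (used there to choose a translation of the \emph{finite} segment that contains all the shifted entry points).
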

\begin{proof}
Let $(f,h)$ denote an optimal solution of the FFL-problem and
suppose that $f$ does not belong to $h$. Let $h'$ be a translation of $h$
such that $f$ belongs to $h'$. We select $h'$ so that to satisfy a condition that
will be stated later. Let $p$ be a demand
point. If $p$ does not use $h$ then:
\begin{equation}
d_{h'}(p,f)\leq\|p-f\|_1=d_h(p,f).
\end{equation}
Otherwise, if $p$ uses $h$, let $q_1,q_2\in h$ be points such that
$$d_h(p,f)=\|p-q_1\|_1+\frac{\|q_1-q_2\|_2}{v}+\|q_2-f\|_1.$$
Let $q_3$ be the point $f+(q_1-q_2)$, which belongs to the line
containing $h'$. Observe from
Claim~\ref{claim1}
that we can select $h'$ so that point $q_3$ belongs to $h'$ for every
demand point $p$ using $h$. Then, by using the triangular inequality with the
$L_1$ metric, we obtain:
\begin{eqnarray}
\nonumber
 d_{h'}(p,f) & \leq & \|p-q_3\|_1+\frac{\|q_3-f\|_2}{v} \\
\nonumber
             &   =  & \|p-q_1+(q_2-f)\|_1+\frac{\|q_1-q_2\|_2}{v} \\
\nonumber
             & \leq & \|p-q_1\|_1+\frac{\|q_1-q_2\|_2}{v}+\|q_2-f\|_1\\
             &   =  & d_h(p,f)\label{eq2}
\end{eqnarray}
From equations (\ref{eq1}) and (\ref{eq2}) we have
$\Phi(f,h')=\sum_{p\in S}w_p \cdot d_{h'}(p,f)\leq\sum_{p\in
S}w_p\cdot d_h(p,f)=\Phi(f,h)$. Then the pair $(f,h')$ must be an
optimal solution and the result thus follows.
\end{proof}

Results similar to Propostion~\ref{prop:fac-in-highway}, stating that
the facility point belongs to the corresponding highway, can be 
found in~\cite{diaz-banez11-3,espejo11}. 
Observe from equations~(\ref{eqnobj}) and~(\ref{eq1})
that there always exists an
optimal solution $(f,h)$ to the FFL-problem in which the length of
$h$ is infinite.
We then assume from this
point forward that every solution satisfies that the highway is a straight
line and the facility point belongs to the highway.
Observe that this assumption does not have negative consequences,
due to the fact that in practice a highway of infinite length is not
possible. In fact, if a solution $(f,h)$ to the problem is such that
$h$ is a straight line, then $(f,h'')$ is also an optimal solution,
where $h''\subset h$ is the segment of minimum length 
such that every demand point both enter and exit $h$ on a point of $h''$.

Let $\alpha$ always denote the non-negative angle of the highway
with respect to the positive direction of the $x$-axis. Unless
otherwise specified, we assume $0\leq\alpha\leq\frac{\pi}{4}$.
Observe that if $\alpha>\frac{\pi}{4}$ we can, by properties of
$L_1$ and $L_2$ metrics, modify the coordinate system so that angle
$\alpha$ satisfies $0\leq\alpha\leq\frac{\pi}{4}$.

Given highway $h$ and a demand point $p$, let $p'$ be the
intersection point between $h$ and the vertical line passing through
$p$. Similarly, let $p''$ be the intersection point between $h$ and
the horizontal line passing through $p$. Let $h_{p'}$ denote the
half-line contained in $h$ that emanates from $p'$ and does not
contain $p''$, and $h_{p''}$ denote the half-line contained in $h$
that emanates from $p''$ and does not contain $p'$. Notice from the
assumption $0\leq\alpha\leq\frac{\pi}{4}$ that given $h$ and a
demand point $p$, $p'$ is the nearest point to $p$ on $h$ under the
$L_1$ metric.

The next lemma characterizes the way in which demand points move
optimally to the facility.
Let $\varphi_v=\frac{\pi}{4}-\arcsin\left(\frac{\sqrt{2}}{2v}\right)$. Since $v>1$
we have $0<\varphi_v<\frac{\pi}{4}$.

\begin{lemma}\label{lemma:way-of-move}
Given the highway $h$ and the facility point $f$ located on $h$, the
following holds for all demand points $p$.
If $0\leq\alpha\leq\varphi_v$ then $p$ moves first to $p'$ and after
that moves to $f$ using $h$.
Otherwise, if $\varphi_v<\alpha\leq\frac{\pi}{4}$, the next
statements are true:
\begin{itemize}
  \item[$($a$)$] If $f\in h_{p'}$ then $p$ moves first to $p'$ and after that
moves to $f$ using $h$.
  \item[$($b$)$] If $f\in h_{p''}$ then $p$ moves first to $p''$ and after that
moves to $f$ using $h$.
  \item[$($c$)$] If $f\in h\setminus (h_{p'}\cup h_{p''})$ then $p$ moves
directly to $f$ without using $h$.
\end{itemize}
\end{lemma}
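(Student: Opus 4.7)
My plan is to reduce the two-variable minimization defining $d_h(p,f)$ in~(\ref{eq1}) to a single-variable piecewise-linear minimization along $h$, and then recover the three cases from the sign of the slope on each piece. As a preliminary step I would observe that, because $f$ lies on $h$, one may take $q_2=f$ in~(\ref{eq1}) without loss of generality: collinearity of $q_1,q_2,f$ on $h$ gives $\|q_1-f\|_2\le\|q_1-q_2\|_2+\|q_2-f\|_2$, and since $v>1$ this yields
\[
\frac{\|q_1-f\|_2}{v}\le\frac{\|q_1-q_2\|_2}{v}+\|q_2-f\|_2\le\frac{\|q_1-q_2\|_2}{v}+\|q_2-f\|_1.
\]
Hence the problem reduces to minimizing $g(q_1):=\|p-q_1\|_1+\|q_1-f\|_2/v$ over $q_1\in h$, and comparing the minimum against the direct cost $g(f)=\|p-f\|_1$.

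Next, by reflecting across the line $h$ if necessary, I assume $p$ lies on or above $h$, so that $p-p'$ points vertically upward. I then parameterize $h$ by signed arc length $t$ measured from $p'$ in the direction of $p''$: $q_1(t)=p'+t(\cos\alpha,\sin\alpha)$, with $p'$ at $t=0$, $p''$ at some $t_0\ge 0$, and $f$ at some $s\in\mathbb{R}$. The three hypotheses (a), (b), (c) of the lemma then correspond to $s\le 0$, $s\ge t_0$, and $0\le s\le t_0$ respectively. Expanding $g(q_1(t))$ gives a piecewise-linear function with breakpoints at $0$, $t_0$ and $s$; on each piece the slope is a sum of a term $\pm(\cos\alpha\pm\sin\alpha)$ (from the $L_1$ distance to $p$) and a term $\pm 1/v$ (from the Euclidean distance to $f$ along $h$).

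The heart of the proof is the sign analysis of these slopes. On the outer pieces ($t<0$ or $t>t_0$) the $L_1$ slope has magnitude $\cos\alpha+\sin\alpha\ge 1>1/v$, so the total slope always pushes the minimizer back into $[0,t_0]$. On the middle piece $t\in[0,t_0]$ the $L_1$ slope is $\cos\alpha-\sin\alpha$, and the critical comparison is whether this exceeds $1/v$. Rewriting $\cos\alpha-\sin\alpha=\sqrt{2}\sin(\pi/4-\alpha)$, the equality $\cos\alpha-\sin\alpha=1/v$ becomes $\sin(\pi/4-\alpha)=\sqrt{2}/(2v)$, i.e.\ $\alpha=\varphi_v$; this is exactly where the threshold of the lemma comes from. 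For $\alpha\le\varphi_v$ the total slope on $[0,t_0]$ is nonnegative regardless of $s$, so the minimum is at $t=0$ and $q_1=p'$. For $\alpha>\varphi_v$ the slope on $[0,s]\cap[0,t_0]$ is negative while the slope on $[\max(0,s),t_0]$ is positive, and the optimum sits at the breakpoint between them: $q_1=p'$ if $s<0$, $q_1=p''$ if $s>t_0$, and $q_1=f$ (giving the direct cost $\|p-f\|_1$) if $0\le s\le t_0$.

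The main obstacle is purely the bookkeeping of signs across the four intervals created by the breakpoints $0$, $t_0$, $s$, together with the small trigonometric identity that turns $\cos\alpha-\sin\alpha=1/v$ into $\alpha=\varphi_v$. No deeper structural argument is needed once the minimization is set up as above: the three conclusions of the lemma then fall out directly from the piecewise-linear geometry.
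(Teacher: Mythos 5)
Your proof is correct, and it follows the same underlying strategy as the paper -- reduce to minimizing the one-variable function $g(u)=\|p-u\|_1+\|u-f\|_2/v$ over $u\in h$ and read off the answer from a first-order analysis -- but the packaging is genuinely different. The paper observes that $g$ is convex (so local minima are global) and then certifies each candidate point ($p'$, $p''$, or $f$) by computing one-sided difference quotients at that point via small $\eps$-perturbations; you instead write down the full piecewise-linear decomposition of $t\mapsto g(q_1(t))$ with breakpoints at $0$, $t_0$, $s$ and locate the minimizer from the sign pattern of the slopes. Your version is somewhat more self-contained (it does not need the convexity remark, since the global structure is explicit) and it makes transparent where $\varphi_v$ comes from: the middle-piece slope $\cos\alpha-\sin\alpha=\sqrt{2}\sin(\frac{\pi}{4}-\alpha)$ crosses $1/v$ exactly at $\alpha=\varphi_v$, which is the same computation the paper performs inside its $\eps$-estimates. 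You also supply a step the paper silently assumes, namely that one may take $q_2=f$ in the two-point minimization of $d_h(p,f)$ when $f\in h$; your triangle-inequality argument for this is correct and worth keeping. One small fix: the reduction to ``$p$ on or above $h$'' should not be phrased as a reflection across the line $h$, since that map is not an $L_1$ isometry for general $\alpha$; use instead the point reflection (central symmetry) through a point of $h$, which preserves both $\|\cdot\|_1$ and $\|\cdot\|_2$, maps $h$ to itself, and swaps the two sides of $h$. With that wording corrected, the case analysis is complete, including the degenerate situations $\alpha=0$ (where $p''$ recedes to infinity but only the $\alpha\leq\varphi_v$ conclusion is needed) and $\alpha=\frac{\pi}{4}$ (where the middle-piece $L_1$ slope vanishes).
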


\begin{proof}
Let $p$ be a demand point and assume w.l.o.g.\ that $p$ is below
$h$. Consider the function $g(u):=\|p-u\|_1+\frac{\|u-f\|_2}{v}$ for
all $u\in h$. Notice that $g$ is convex because it is a sum of two
convex functions. Then any local minimum of $g$ is a global minimum.
Let $\theta=\frac{\pi}{4}-\alpha$.

Suppose $0\leq\alpha\leq\varphi_v$. Given $\eps>0$ small enough, let
$u_1\in h\setminus h_{p'}$ and $u_2\in h_{p'}$ be the points such that
$\|p-u_1\|_1=\|p-u_2\|_1=\|p-p'\|_1+\sqrt{2}\eps$. Refer to
Fig.~\ref{fig:movement-2}.
\begin{figure}[h]
  \centering
  \includegraphics[width=7.0cm]{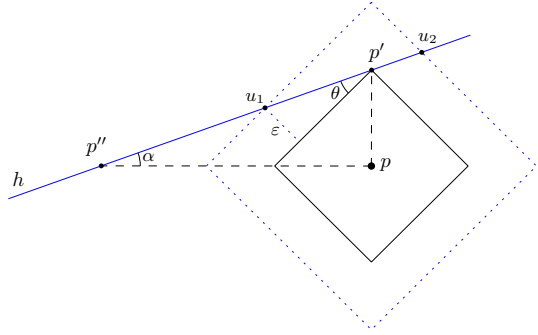}
  \caption{\small{Proof of Lemma~\ref{lemma:way-of-move}. The boundary of the square
    represented with solid lines is the set of points $u$ such that
    $\|p-u\|_1=\|p-p'\|_1$, and the perimeter of the square represented with dotted
    lines is the set of points $u$ such that $\|p-u\|_1=\|p-p'\|_1+\sqrt{2}\eps$.}}
  \label{fig:movement-2}
\end{figure}

Then we have the following:
\begin{eqnarray}
\nonumber
g(u_1)-g(p') & = & \sqrt{2}\eps +\frac{\|u_1-f\|_2-\|p'-f\|_2}{v}\\
\nonumber
  & \geq & \sqrt{2}\eps -\frac{\|p'-u_1\|_2}{v} \\
\nonumber
  &=& \eps\left(\sqrt{2} -\frac{1}{\sin\theta \cdot v}\right)\\
\nonumber
 & \geq & \eps\left(\sqrt{2} -\frac{1}{\sin(\frac{\pi}{4}-\varphi_v) \cdot v}\right)\\
\nonumber
 & = & \eps\left(\sqrt{2}
\nonumber
-\frac{1}{\sin\left(\arcsin\left(\frac{\sqrt{2}}{2v}\right)\right) \cdot  v}\right)\\
  & = & 0\label{eq3}
\end{eqnarray}
\begin{eqnarray}
\nonumber
g(u_2)-g(p') & = & \sqrt{2}\eps +\frac{\|u_2-f\|_2-\|p'-f\|_2}{v}\\
\nonumber
  & \geq & \sqrt{2}\eps -\frac{\|p'-u_2\|_2}{v} \\
\nonumber
  &   =  & \eps\left(\sqrt{2} -\frac{1}{\cos\theta \cdot v}\right)\\
\nonumber
 & \geq & \sqrt{2}\eps\left(\frac{v-1}{v}\right) \\
  & > & 0\label{eq4}
\end{eqnarray}
From equations (\ref{eq3}) and (\ref{eq4}) we conclude that $g(p')$
is the minimum of $g$. Therefore we have $d_h(p,f)=g(p')$ and the
first part of the lemma thus follows.

Suppose now $\varphi_v<\alpha\leq\frac{\pi}{4}$. Then we have three
cases:

{\em Case 1}: $f\in h_{p'}$. On one hand we have $g(u)>g(p')$ for
all points $u\in h\setminus h_{p'}$. On the other hand, if $\eps>0$
is small enough and $u_2\in h_{p'}$ is the point
such that $\|p-u_2\|_1=\|p-p'\|_1+\sqrt{2}\eps$, then
$g(u_2)-g(p')=\eps\left(\sqrt{2} -\frac{1}{\cos\theta \cdot
v}\right)>0$. Therefore, $d_h(p,f)=g(p')$ and statement (a) follows.

{\em Case 2}: $f\in h_{p''}$. Let $\eps>0$ be a small enough value.
On one hand, if $u_1\in h_{p''}$ is the point
such that $\|p-u_1\|_1=\|p-p''\|_1+\sqrt{2}\eps$, then
$g(u_1)-g(p'')=\eps\left(\sqrt{2} -\frac{1}{\cos\theta\cdot
v}\right)>0$. 
On the other hand, if $u_2\in h\setminus h_{p''}$ is the point
such that $\|p-u_2\|_1=\|p-p''\|_1-\sqrt{2}\eps$, then
\begin{eqnarray*}
g(u_2)-g(p'') & = & \eps\left(\frac{1}{\sin\theta\cdot v}-\sqrt{2}\right)\\
 & > & \eps\left(\frac{1}{\sin\left(\frac{\pi}{4}-\varphi_v\right)\cdot
v}-\sqrt{2}\right)\\
 & = & 0
\end{eqnarray*}
Therefore, $d_h(p,f)=g(p'')$ and statement (b)
follows.

{\em Case 3}: $f\in h\setminus (h_{p'}\cup h_{p''})$. If $\theta=0$
then $f$ is one of the nearest points to $p$ on $h$ by considering
the $L_1$ metric. Thus
$g(u)=\|p-u\|_1+\frac{\|u-f\|_2}{v}\geq\|p-u\|_1\geq\|p-f\|_1=g(f)$.
Otherwise, if $\theta>0$, we proceed as follows. On one hand we have
$g(u)>g(f)$ for all points $u\in h$ to the left of $f$. On the other
hand, if $\eps>0$ is small enough and $u_2\in h$ is the nearest
point to $f$ satisfying $\|p-u_2\|_1=\|p-f\|-\sqrt{2}\eps$, then
$g(u_2)-g(f)=\eps\left(\frac{1}{\sin\theta\cdot v}-\sqrt{2}\right)>0$.
Therefore, $d_h(p,f)=g(f)$ and statement (c) follows.
\end{proof}
Fig.~\ref{fig:movement} illustrates Lemma~\ref{lemma:way-of-move}.
\begin{figure}[h]
  \centering
  \includegraphics[width=15cm]{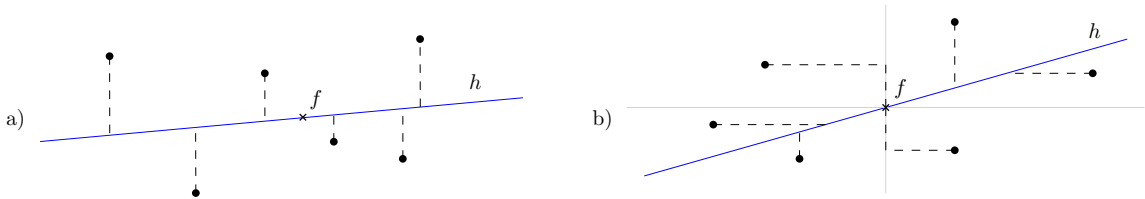}
  \caption{\small{a) If $0\leq\alpha\leq\varphi_v$ then all points
  move vertically to $h$.
  b) if $\varphi_v<\alpha\leq\frac{\pi}{4}$ then some points move
  vertically to $h$, some other points move horizontally, and the rest of the points move
  directly to $f$.}}
  \label{fig:movement}
\end{figure}
Because of Lemma~\ref{lemma:way-of-move}, the travel time $d_h(p,f)$ between a
demand point $p$ and facility $f$ simplifies to:
\begin{equation}\label{eq1}
\min\left\{
\begin{array}{l}
\|p-f\|_1,\\
\|p-p'\|_1+\frac{\|p'-f\|_2}{v},\\
\|p-p''\|_1+\frac{\|p''-f\|_2}{v}
\end{array}
\right.
\end{equation}

Given a solution $(f,h)$ to the
FFL-problem, we can always partition the set $S$ of demand points
into there sets $S_1:=S_1(f,h)$, $S_2:=S_2(f,h)$, and
$S_3:=S_3(f,h)$ as follows. Set $S_1$ contains the points $p\in S$
such that $x_p\leq x_f$ and $y_p\geq y_f$ and the points $q\in S$
such that $x_q\geq x_f$ and $y_q\leq y_f$. Set $S_2$ contains the
points $p\in S$ such that $x_p<x_f$ and $p$ is either on or below
$h$, and the points $q\in S$ such that $x_q> x_f$ and $q$ is either
on or above $h$. Set $S_3$ is equal to $S\setminus (S_1\cup S_2)$.
It is straightforward to obtain what follows (refer to Fig.~\ref{fig:expression-detail}).
\begin{figure}[h]
  \centering
  \includegraphics[width=15cm]{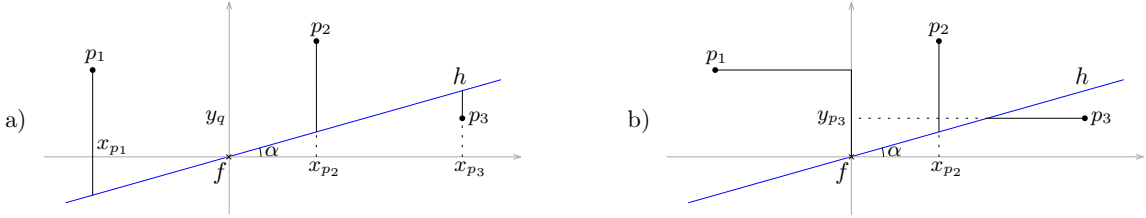}
  \caption{\small{Points $p_1$, $p_2$, and $p_3$ belong to $S_1$, $S_2$, and $S_3$,
  respectively. In case a) we have $0\leq\alpha\leq\varphi_v$.
  In case b) we have $\varphi_v<\alpha\leq\frac{\pi}{4}$.}}
  \label{fig:expression-detail}
\end{figure}

If $0\leq\alpha\leq\varphi_v$ then $d_h(p,f)$ is equal to:
\begin{equation*}
\left\{
\begin{array}{rr}
  |\y{p}-\y{f}|+|\x{p}-\x{f}|\tan\alpha+
  \frac{|\x{p}-\x{f}|}{\cos\alpha\cdot v} & \text{ if }p\in S_1\\
  |\y{p}-\y{f}|-|\x{p}-\x{f}|\tan\alpha+
  \frac{|\x{p}-\x{f}|}{\cos\alpha\cdot v} & \text{ if }p\in S_2\\
  -|\y{p}-\y{f}|+|\x{p}-\x{f}|\tan\alpha+
  \frac{|\x{p}-\x{f}|}{\cos\alpha\cdot v} & \text{ if }p\in S_3
\end{array}
\right.
\end{equation*}
and the objective function $\Phi(f,h)$ equals:
\begin{eqnarray}
\nonumber 
&&\sum_{p\in S_1}w_p|\y{p}-\y{f}|+\sum_{p\in S_2}w_p|\y{p}-\y{f}|-\sum_{p\in S_3}w_p|\y{p}-\y{f}|+\\
\nonumber
& &
\tan\alpha\left(\sum_{p\in S_1}w_p|\x{p}-\x{f}|-\sum_{p\in
S_2}w_p|\x{p}-\x{f}|+\sum_{p\in S_3}w_p|\x{p}-\x{f}|\right)+\\
&&\frac{1}{\cos\alpha\cdot v}\sum_{p\in S}w_p|\x{p}-\x{f}|\label{eq7}
\end{eqnarray}
Otherwise, if $\varphi_v<\alpha\leq\frac{\pi}{4}$, then
$d_h(p,f)$ is equal to:
\begin{equation*}
\left\{
\begin{array}{cr}
  |\x{p}-\x{f}|+|\y{p}-\y{f}| & \text{ if }p\in S_1\\
  |\y{p}-\y{f}|-|\x{p}-\x{f}|\tan\alpha+\frac{|\x{p}-\x{f}|}{\cos\alpha\cdot v} & \text{
if }p\in S_2\\
  |\x{p}-\x{f}|-|\y{p}-\y{f}|\cot\alpha+\frac{|\y{p}-\y{f}|}{\sin\alpha\cdot v} & \text{
if }p\in S_3
\end{array}
\right.
\end{equation*}
and $\Phi(f,h)$ equals:
\begin{eqnarray}
\nonumber
&&\sum_{p\in
S_1}w_p\left(|\x{p}-\x{f}|+|\y{p}-\y{f}|\right)+\sum_{p\in
S_2}w_p|\y{p}-\y{f}|+\sum_{p\in S_3}w_p|\x{p}-\x{f}|+\\
&&\left(\frac{1}{\cos\alpha\cdot v}-\tan\alpha\right)\sum_{p\in S_2}w_p|\x{p}-\x{f}|+
\left(\frac{1}{\sin\alpha\cdot
v}-\cot\alpha\right)\sum_{p\in S_3}w_p|\y{p}-\y{f}|\label{eq6}
\end{eqnarray}

\section{Reducing the search space}\label{section:discretization}

Let $G$ be the grid defined by the set of all axis-parallel lines
passing through the elements of $S$.
\begin{figure}[h]
  \centering
  \includegraphics[width=11.0cm]{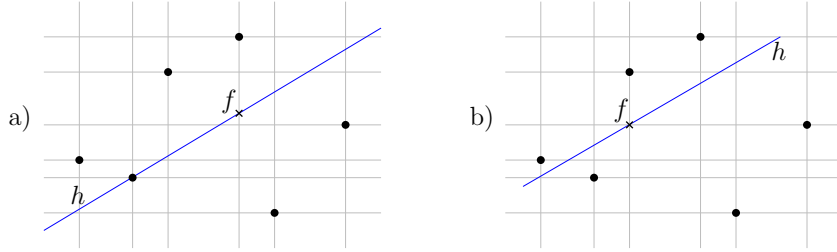}
  \caption{\small{a) Lemma~\ref{lemma:discretization-FFL}~(a). b)
Lemma~\ref{lemma:discretization-FFL}~(b)}}
  \label{fig:grid-and-disc}
\end{figure}
\begin{lemma}\label{lemma:discretization-FFL}
There always exists an optimal solution $(f,h)$ to the FFL-problem
satisfying one of the next statements: $(a)$ $h$ contains a point of
$S$ and $f$ is on a line of grid $G$, and $(b)$ $f$ is a vertex of
grid $G$. Refer to Fig.~\ref{fig:grid-and-disc}.
\end{lemma}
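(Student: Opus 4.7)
The plan is to exploit the piecewise-linear structure of $\Phi$ as a function of $f=(x_f,y_f)$ once the angle $\alpha$ of $h$ is fixed. I would start from an arbitrary optimal solution $(f^*,h^*)$, let $\alpha^*$ denote the angle of $h^*$, and regard $\Phi$ as a function on the plane with $h$ always being the line of slope $\tan\alpha^*$ through $f$. Inspecting equations~(\ref{eq7}) and~(\ref{eq6}), $\Phi$ is linear in $x_f$ and $y_f$ whenever, for every $p\in S$, the signs of $x_p-x_f$, $y_p-y_f$ and $(y_p-y_f)-(x_p-x_f)\tan\alpha^*$ are all fixed; indeed these signs determine the partition $S=S_1\cup S_2\cup S_3$ and the resolution of every absolute value in the two formulas. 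Hence $\Phi$ is a piecewise-linear function whose break lines fall into three families: the vertical lines $x=x_p$, the horizontal lines $y=y_p$, and the lines of slope $\tan\alpha^*$ through each demand point $p$ (the last family corresponding to the event $p\in h$).

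Next I would reduce the search to a vertex of this arrangement. Since $\Phi(f)\to\infty$ as $\|f\|\to\infty$ (each term $w_p\,d_h(p,f)$ contains an $L_1$ contribution that grows without bound), the global minimum is attained at a finite point. On each two-dimensional cell of the arrangement $\Phi$ is linear, so its minimum on the cell is attained on the boundary; restricting $\Phi$ to an edge of the cell yields a one-dimensional linear function whose minimum lies at an endpoint of the edge. Iterating this reduction, there is an optimum at an arrangement vertex, i.e., at an intersection of two non-parallel break lines. Whenever $\Phi$ happens to be constant along an edge or throughout a cell, one may slide to such a vertex without changing the objective.

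The key observation for the case analysis is that the third family of break lines consists of lines all parallel to $h^*$, so two distinct members of that family never intersect. Therefore the two non-parallel break lines meeting at the selected vertex fall into one of the following combinations. The combination (vertical, horizontal) forces $x_f=x_{p_1}$ and $y_f=y_{p_2}$ for some $p_1,p_2\in S$, so $f$ is a vertex of $G$, which is statement~(b). The combinations (vertical, slope-$\tan\alpha^*$) or (horizontal, slope-$\tan\alpha^*$) place $f$ on a line of $G$ while making $h^*$ pass through the demand point defining the slope-$\tan\alpha^*$ break line, which is statement~(a). Figure~\ref{fig:grid-and-disc} depicts these two configurations.

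The main technical obstacle is the degenerate situation in which $h^*$ already contains two or more demand points but $f^*$ misses every vertical and horizontal grid line: the corresponding slope-$\tan\alpha^*$ break lines coincide with $h^*$ and by themselves produce no arrangement vertex. I would handle this by sliding $f^*$ along $h^*$ in a direction along which the locally linear $\Phi$ is non-increasing; the containment of those demand points in $h^*$ is preserved throughout, and since $\alpha^*\le\pi/4$ implies $h^*$ is not vertical, the slide must eventually cross a vertical grid line without the objective increasing, placing us in configuration~(a). This completes the reduction to the two cases claimed by the lemma.
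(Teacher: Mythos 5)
Your proposal is correct and follows essentially the same route as the paper: both fix the angle of $h$, attach $h$ rigidly to $f$, and exploit the piecewise linearity of $\Phi$ in the position of $f$, whose break lines are exactly the grid lines of $G$ and the lines of slope $\tan\alpha$ through the demand points (the event $p\in h$). The paper packages this as sequential one-dimensional perturbations (horizontal, then vertical, then along $h$), using optimality to force the linear coefficient to vanish, while you descend through the full two-dimensional arrangement to a vertex; the resulting case analysis and conclusions are identical.
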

\begin{proof}
Let $(f,h)$ be an optimal solution of the FFL-problem satisfying
neither condition (a) nor condition (b). Using local linear
perturbations, we will transform $(f,h)$ into other optimal solution
that satisfies at least one of these conditions.

Assume first the case where $\varphi_v<\alpha\leq\frac{\pi}{4}$. Let
$\delta_1\geq 0$ (resp. $\delta_2\geq 0$) be the smallest value such
that if we translate both $f$ and $h$ with vector $(-\delta_1,0)$
(resp. $(\delta_2,0)$) then $f$ belongs to a vertical line of $G$ or
$h$ contains a point of $S$. Given $\eps\in[-\delta_1,\delta_2]$,
let $f_{\eps}$ and $h_{\eps}$ be $f$ and $h$ translated with vector
$(\eps,0)$, respectively. Using Lemma~\ref{lemma:way-of-move}, we
partition $S$ into four sets $Z_1$, $Z_2$, $Z_3$, and $Z_4$ as
follows. Set $Z_1$ (resp. $Z_2$) contains the demand points doing a
rightwards (resp. leftwards) movement to reach $h$. Set $Z_3$ (resp.
$Z_4$) contains the demand points doing only a downwards (resp.
upwards) movement to reach $h$. Observe that:
$$d_{h_{\eps}}(p,f_{\eps})-d_h(p,f)=\left\{
\begin{array}{cr}
  \eps & \text{ if }p\in Z_1\\
  -\eps & \text{ if }p\in Z_2\\
  \eps\cdot c_{\alpha} & \text{ if }p\in Z_3\\
  -\eps\cdot c_{\alpha} & \text{ if }p\in Z_4
\end{array}
\right.$$ where $c_{\alpha}=\tan\alpha-\frac{1}{\cos\alpha\cdot v}$.
Let $W_i$ denote $\sum_{p\in Z_i}w_p$, $(i=1,2,3,4)$. Thus, for any
$\eps\in[-\delta_1,\delta_2]$, the variation of objective function
when we translate both $f$ and $h$ with vector $(\eps,0)$ is the
following:
\begin{eqnarray*}
\Phi(f_{\eps},h_{\eps})-\Phi(f,h)
  &=& \sum_{p\in S}w_p\cdot\left(d_{h_{\eps}}(p,f_{\eps})- d_{h}(p,f)\right)\\
  &=& \sum_{p\in Z_1}w_p\eps+\sum_{p\in Z_2}w_p(-\eps)+\\
  & & \sum_{p\in Z_3}w_p(\eps c_{\alpha})+
      \sum_{p\in Z_4}w_p\left(-\eps c_{\alpha}\right)\\
  &=&\eps\left(W_1-W_2+c_{\alpha}(W_3-W_4)\right)
\end{eqnarray*}
Since $(f,h)$ is optimal we must have
$\Phi(f_{\eps},h_{\eps})-\Phi(f,h)\geq 0$ for all
$\eps\in[-\delta_1,\delta_2]$. It implies
$W_1-W_2+c_{\alpha}(W_3-W_4)=0$ and
$\Phi(f_{\eps},h_{\eps})=\Phi(f,h)$ for all
$\eps\in[-\delta_1,\delta_2]$. Therefore, by translating both $f$
and $h$ with vector either $(-\delta_1,0)$ or $(\delta_2,0)$, we
ensure that $f$ is on a vertical line of $G$ or $h$ passes through a
point of $S$, or both conditions. If it holds only that $f$ is on a
vertical line of $G$, then we repeat the same operation in the
vertical direction in order to ensure that $f$ is on a horizontal
line of $G$ or $h$ passes through a point of $S$, and condition (a)
or condition (b) holds. Otherwise, if it holds only that $h$ passes
through a point of $S$, then it is straightforward to prove, by
using similar arguments, that $f$ can be translated along $h$ in
order to ensure that $f$ belongs to a line of $G$, that is either vertical or
horizontal, and condition (a) holds. In fact, for some demand point
$p$ of $S$, $f$ will coincide with the point in which $p$ enters
$h$, that is, $p'$ or $p''$.

In the case where $0\leq\alpha\leq\varphi_v$ every demand point $p$
moves vertically to $h$ (Lemma~\ref{lemma:way-of-move}), and we can
proceed as follows by using arguments similar to the above ones. We
first translate vertically both $f$ and $h$ with the same vector in
order to ensure that $h$ contains a point of $S$. After that, $f$ is
translated along $h$ if necessary in order to $f$ belongs to a
vertical line of $G$ and condition (a) holds. The lemma thus
follows.
\end{proof}
\begin{corollary}\label{cor:discretization-FFL}
If $\alpha\leq\varphi_v$ then there is an optimal solution
satisfying Lemma~\ref{lemma:discretization-FFL}~$(a)$.
\end{corollary}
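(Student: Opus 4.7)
The plan is to carry out the vertical counterpart of the perturbation argument already used in the proof of Lemma~\ref{lemma:discretization-FFL}. Because $\alpha\leq\varphi_v$, Lemma~\ref{lemma:way-of-move} tells us that every demand point reaches $h$ by a purely vertical movement to its projection $p'$ and then travels along $h$ to $f$. This single structural fact drives the whole argument: it means that the optimality analysis has to react only to two kinds of perturbation, namely translating $h$ vertically and sliding $f$ along $h$.

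First I would translate both $f$ and $h$ by a vertical vector $(0,\eps)$. Since $h$ is translated by $(0,\eps)$, each projection $p'$ is translated identically, and because $f$ undergoes the same translation the along-highway term $\|p'-f\|_2/v$ does not change at all. Only $\|p-p'\|_1$ is affected, contributing $+\eps$ when $p$ lies below $h$ and $-\eps$ when $p$ lies above $h$. Letting $\delta_1,\delta_2\geq 0$ be the smallest translations for which the translated highway first meets a point of $S$, the partition of $S$ into ``above'' and ``below'' points is preserved throughout $\eps\in[-\delta_1,\delta_2]$, so on that interval $\Phi$ is an affine function of $\eps$ whose slope is the weighted difference (below minus above). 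Optimality of $(f,h)$ forces this slope to vanish, hence $\Phi$ is constant on the interval and we may translate all the way to an endpoint, producing an optimal solution whose highway contains a point of $S$.

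Once $h$ passes through a point of $S$, I would slide $f$ along $h$ keeping $h$ fixed. By Lemma~\ref{lemma:way-of-move} the projections $p'$ are then unchanged, so only the terms $\|p'-f\|_2/v$ vary, each at constant rate $\pm 1/v$. Between consecutive crossings of $h$ with lines of $G$ (including the vertical lines through the $p'$, which themselves belong to $G$) the function $\Phi$ is therefore affine in the slide parameter, and the same optimality argument forces it to be constant. We may then slide $f$ to the nearest line of $G$ without increasing $\Phi$, producing an optimal solution satisfying condition (a). The only subtlety, and the main thing I would be careful about, is guaranteeing that the above/below partition of $S$ really stays fixed during the vertical translation step; this is precisely why $\delta_1$ and $\delta_2$ are chosen to stop at the first contact with a point of $S$, and it is what legitimises the linearity of $\Phi$ in $\eps$.
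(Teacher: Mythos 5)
Your argument is correct and is essentially the paper's own: the corollary is obtained in the last paragraph of the proof of Lemma~\ref{lemma:discretization-FFL}, which, exactly as you do, first translates $f$ and $h$ vertically (noting that for $\alpha\leq\varphi_v$ every point enters $h$ at $p'$, so the along-highway term is invariant and $\Phi$ is affine in the translation parameter) until $h$ meets a point of $S$, and then slides $f$ along $h$ until it lies on a line of $G$. Your version just spells out the linearity and the choice of $\delta_1,\delta_2$ more explicitly than the paper does.
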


\section{The algorithm to solve the
FFL-problem}\label{section:algorithm-free}

\begin{theorem}\label{theorem:free/var/sum}
The FFL-problem can be solved in $O(n^3)$ time.
\end{theorem}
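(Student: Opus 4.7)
The plan is to invoke Lemma~\ref{lemma:discretization-FFL} and process the two structural types of optimal solutions independently, then return the better of the two.

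For case (b), I enumerate the $O(n^2)$ vertices of the grid $G$ as candidates for $f$. For each candidate $f^{\star}$, the highway is forced to pass through $f^{\star}$ and is parameterized only by its angle $\alpha\in[0,\pi/4]$. Using the two closed-form expressions for $\Phi(f^{\star},h)$ given in equations~(\ref{eq7}) and~(\ref{eq6}) (the first for $\alpha\le\varphi_v$, the second for $\alpha>\varphi_v$), the objective becomes a piecewise smooth function of $\alpha$. Its breakpoints occur (i) at the fixed threshold $\alpha=\varphi_v$ and (ii) whenever the rotating line $h$ sweeps past a demand point, causing some $p\in S$ to migrate between $S_1(f^{\star},h)$, $S_2(f^{\star},h)$ and $S_3(f^{\star},h)$; there are $O(n)$ such events. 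On each arc between consecutive breakpoints the signed sums $\sum_{p\in S_i} w_p\,|\x{p}-\x{f^{\star}}|$ and $\sum_{p\in S_i} w_p\,|\y{p}-\y{f^{\star}}|$ are constants, so the restriction of $\Phi$ is an elementary function of $\tan\alpha$, $\sin\alpha$, $\cos\alpha$ whose global minimum on the arc can be obtained in $O(1)$ by calculus. After a single radial sort of $S$ around $f^{\star}$ (and with the partial sums maintained incrementally in $O(1)$ per event) each candidate vertex takes $O(n\log n)$ time; summed over the $O(n^2)$ vertices this gives $O(n^3\log n)$, which is reduced to $O(n^3)$ by doing the radial sorting once at the outset in amortized form over rows/columns of $G$.

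For case (a), I enumerate the $n$ demand points that may lie on $h$. Fixing such a point $p_0$, the line $h$ is determined by the angle $\alpha$; by Lemma~\ref{lemma:discretization-FFL}(a), the facility $f$ must additionally lie on some axis-parallel line of $G$. Ranging over the $2n$ candidate grid lines $\ell$ yields $O(n^2)$ pairs $(p_0,\ell)$, and for each such pair the pair $(f,h)$ is reduced to a one-parameter family (say $\alpha$, with $f = h(\alpha)\cap \ell$). Exactly as in case (b), the objective $\Phi$ along this one-parameter family is piecewise smooth with $O(n)$ breakpoints coming from the $S_i$-partition changes and the transition $\alpha=\varphi_v$; on each smooth piece the minimum is computed in $O(1)$. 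With the usual preprocessing, each of the $O(n^2)$ pairs is handled in $O(n)$ time, for $O(n^3)$ total in case~(a).

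The hardest part of the argument is the bookkeeping around the breakpoints: I must show that, as the single free parameter varies, the $S_1/S_2/S_3$-partition changes only $O(n)$ times and that the relevant weighted sums can be updated in $O(1)$ per change, so that the minimum on each smooth piece (whose explicit form depends on the regime $\alpha\lessgtr\varphi_v$ dictated by equations~(\ref{eq7}) and~(\ref{eq6})) can be computed in $O(1)$. A subtlety is the seam at $\alpha=\varphi_v$, where the shape of the shortest path changes per Lemma~\ref{lemma:way-of-move}: the two closed forms must be evaluated on their respective sub-arcs and compared. Once these local minimizations are in place, the final algorithm simply iterates over both case (a) and case (b) and returns the overall minimum, establishing the $O(n^3)$ bound.
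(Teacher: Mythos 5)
Your proposal is correct and follows essentially the same route as the paper: split the search according to the two cases of Lemma~\ref{lemma:discretization-FFL}, enumerate $O(n^2)$ candidates in each case, parameterize the remaining degree of freedom by the angle $\alpha$, and exploit that $\Phi$ is piecewise of the form $b_1+b_2\tan\alpha+b_3\cot\alpha+\frac{b_4}{\cos\alpha}+\frac{b_5}{\sin\alpha}$ with $O(n)$ breakpoints, each piece minimized in $O(1)$. The only loose point is your ``amortized radial sorting'' remark for shaving the $\log n$ in case (b); the paper makes this precise by precomputing the dual arrangement of $S$ and extracting each angular order in $O(n)$ time via the Zone Theorem, which is exactly the sharing you are gesturing at.
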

\begin{proof}
We find an optimal solution by solving two cases separately. The
first case is when solution satisfies
Lemma~\ref{lemma:discretization-FFL}~(a), and the second case is
when solution satisfies Lemma~\ref{lemma:discretization-FFL}~(b).

In order to solve the first case we find for each demand point $p$
and each line $\ell$ of $G$ an optimal angle $\alpha$ such that
$\Phi(f_{\alpha},h_{\alpha})$ is minimized, where $h_{\alpha}$ is
the line passing through $p$ whose angle with respect to the
positive direction of the $x$-axis is equal to $\alpha$, and
$f_{\alpha}$ is the intersection point between $\ell$ and
$h_{\alpha}$. Assume w.l.o.g.\ that $\ell$ is vertical and $p$ is
located to the left of $\ell$. It is easy to observe from
equations~(\ref{eq7}) and~(\ref{eq6}) that for any $\alpha\in[0,\frac{\pi}{4}]$ the
expression of $\Phi(f_{\alpha},h_{\alpha})$ has the form
$b_1+b_2\tan\alpha+b_3\cot\alpha+\frac{b_4}{\cos\alpha}+\frac{b_5}{\sin\alpha}$,
where $b_1,b_2,b_3,b_4,b_5$ are constants. Furthermore, if we progressively
increase the value of $\alpha$ from $0$ to $\frac{\pi}{4}$, that
expression changes whenever sets $S_1$, $S_2$, and $S_3$ change,
that is, when $h_{\alpha}$ crosses a demand point, $f_{\alpha}$
crosses a horizontal line of $G$, or $\alpha=\varphi_v$. Then
consider the sequence
$0=\alpha_0<\alpha_1<\dots<\alpha_m=\frac{\pi}{4}$ of $m+1=O(n)$
angles, where each angle $\alpha_i$ $(1\leq i<m)$ is such that
either $h_{\alpha_i}$ contains a demand point, $f_{\alpha_i}$
belongs to a horizontal line of grid $G$, or $\alpha_i=\varphi_v$.
Notice then that the expression of $\Phi(f_{\alpha},h_{\alpha})$ is
the same for all $\alpha\in[\alpha_i,\alpha_{i+1}]$ $(0\leq i<m)$.
If we preprocess the demand points $S$ by constructing the dual
arrangement of $S$~\cite{o-rourke98}, such a sequence can be
obtained in $O(n)$ time by using both the Zone
Theorem~\cite{o-rourke98} and the order of the demand points with
respect to the $y$-coordinate. Observe that if for a value of $i$ we
know the expression of $\Phi(f_{\alpha},h_{\alpha})$ in the interval
$[\alpha_i,\alpha_{i+1}]$, then $\Phi(f_{\alpha},h_{\alpha})$ can be
minimized in constant time in that interval. Furthermore, if
$h_{\alpha_{i+1}}$ contains a demand point or $f_{\alpha_{i+1}}$
belongs to a horizontal line of $G$, then the expression of
$\Phi(f_{\alpha},h_{\alpha})$ in the interval
$[\alpha_{i+1},\alpha_{i+2}]$ can be obtained in constant time from
the expression of $\Phi(f_{\alpha},h_{\alpha})$ in the interval
$[\alpha_i,\alpha_{i+1}]$. It is easy to see now that
$\Phi(f_{\alpha},h_{\alpha})$, $0\leq\alpha\leq\frac{\pi}{4}$, can
be minimized in $O(n)$ time by minimizing
$\Phi(f_{\alpha},h_{\alpha})$ in $[\alpha_i,\alpha_{i+1}]$ for
$i=0,1,\dots,m-1$. Since there are $n$ demand points and $G$ has
$2n$ lines, then an overall $O(n^3)$-time algorithm is obtained.

We can proceed similarly in order to solve the second case. We find
an optimal solution $(u,h)$ for each vertex $u$ of the grid $G$ as
follows. Let $u$ be a vertex of $G$. Given an angle $\alpha$, let
$h_{\alpha}$ be the line passing through $u$, whose angle with
respect to the positive direction of the $x$-axis is equal to
$\alpha$. Then, by Corollary~\ref{cor:discretization-FFL}, we look
for an angle $\alpha\in(\varphi_v,\frac{\pi}{4}]$ such that the
objective function $\Phi(u,h_{\alpha})$ is minimized. It follows
from equation~(\ref{eq6}) that the expression of
$\Phi(u,h_{\alpha})$ has the form
$c_1+c_2\tan\alpha+c_3\cot\alpha+\frac{c_4}{\cos\alpha}+\frac{c_5}{\sin\alpha}$ for any
$\alpha\in(\varphi_v,\frac{\pi}{4}]$, where $c_1,\ldots,c_5$ are
constants. If we progressively increase $\alpha$ from $\varphi_v$ to
$\frac{\pi}{4}$ the expression of $\Phi(u,h_{\alpha})$ keeps
unchanged as long as $h_{\alpha}$ does not cross a demand point. The
sorted sequence of values of $\alpha$ in which it happens can be
obtained in linear time by using duality~\cite{o-rourke98}. That
sequence of values induces a partition of the interval
$(\varphi_v,\frac{\pi}{4}]$ into intervals where in each of them the
expression of $\Phi(u,h_{\alpha})$ is constant. We can now continue
as was done above to solve the first case. Since $G$ has $O(n^2)$
vertices then an overall $O(n^3)$-time algorithm is thus obtained.
The result thus follows.
\end{proof}

\section{A refinement of the
algorithm}\label{section:refinement}

 Theorem~\ref{theorem:free/var/sum} shows an algorithm
that solves the FFL-problem by dividing the search of optimal
solutions into two steps. It first looks in $O(n^3)$ time for an
optimal solution that satisfies
Lemma~\ref{lemma:discretization-FFL}~(a), and after that looks
within the same time complexity for an optimal solution satisfying
Lemma~\ref{lemma:discretization-FFL}~(b). In the following we show
that for ``reasonable'' values of speed $v$ we can simplify the
algorithm of Theorem~\ref{theorem:free/var/sum} by finding only
optimal solutions that hold
Lemma~\ref{lemma:discretization-FFL}~(a). We will use the next
technical lemma.
\begin{lemma}\label{lemma:function}
Let $a$, $b$, $c$, and $v>\frac{3\sqrt{2}}{4}$ be non-negative
constants and $F:(0,\frac{\pi}{2})\rightarrow\mathbb{R}$ be a
function so that
$$F(x)=a\left(\frac{1-v\sin x}{\cos x}\right)+b\left(\frac{1-v\cos x}{\sin x}\right)+c$$
for all $x\in(0,\frac{\pi}{2})$. Then next statements are true:
\begin{itemize}
  \item[$($a$)$] If $a=0$ and $b=0$ then $F$ is constant.
  \item[$($b$)$] If $a>0$ and $b=0$ then $F$ is monotone decreasing.
  \item[$($c$)$] If $a=0$ and $b>0$ then $F$ is monotone increasing.
  \item[$($d$)$] If $a>0$ and $b>0$ then $F$ has no minima.
\end{itemize}
\end{lemma}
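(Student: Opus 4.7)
Cases (a), (b), (c) are immediate from direct differentiation. In (a), $F\equiv c$. For (b) and (c) I would compute
$$F'(x) = a\cdot\frac{\sin x - v}{\cos^2 x} + b\cdot\frac{v - \cos x}{\sin^2 x},$$
and note that the hypothesis $v>\tfrac{3\sqrt2}{4}>1$ forces $v>\sin x$ and $v>\cos x$ on $(0,\pi/2)$. In case (b) the second summand vanishes and $F'<0$; in case (c) the first vanishes and $F'>0$. Strictness of monotonicity follows.

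For (d), my first observation is that $F(x)\to-\infty$ at both endpoints: as $x\to 0^+$ the summand $b(1-v\cos x)/\sin x$ has numerator tending to $1-v<0$ and denominator tending to $0^+$, so the whole expression diverges to $-\infty$; symmetrically at $x\to\pi/2^-$ via the $a$-summand. This already kills any global minimum. To rule out interior local minima as well, I will show that every critical point $x_0$ satisfies $F''(x_0)<0$, so every critical point is a strict local maximum; since $F$ is smooth on $(0,\pi/2)$, any local minimum would be a critical point, yielding a contradiction.

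The core computation is then to analyse $F''$ at a critical point. Writing $s=\sin x_0$, $c=\cos x_0$, the equation $F'(x_0)=0$ becomes $a(v-s)s^2=b(v-c)c^2$. A direct calculation yields
$$F''(x) = a\cdot\frac{1+\sin^2 x-2v\sin x}{\cos^3 x} + b\cdot\frac{1+\cos^2 x-2v\cos x}{\sin^3 x}.$$
Substituting the critical-point relation to eliminate $a/b$ and clearing positive factors ($b$, $s^{3}$, $c$, $v-s$), the sign of $F''(x_0)$ coincides with the sign of
$$N := s(v-c)(1+s^2-2vs) + c(v-s)(1+c^2-2vc).$$
Expanding and using $s^2+c^2=1$ and $s^3+c^3=(s+c)(1-sc)$, then introducing $T=s+c$ and $sc=(T^2-1)/2$, collapses $2N$ to the one-variable cubic
$$P(T) = vT^3 - 3T^2 + 3vT + 3 - 4v^2, \qquad T\in(1,\sqrt2\,].$$

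The final step is to show $P(T)<0$ on this range. Differentiating gives $P'(T)=3(vT^2-2T+v)$; the quadratic in the brackets has discriminant $4(1-v^2)<0$ for $v>1$ and leading coefficient $v>0$, so $P'>0$ throughout and $P$ is strictly increasing. It therefore suffices to check $P(\sqrt2)<0$. Substituting gives
$$P(\sqrt2) = -4v^2+5\sqrt2\,v-3 = -4\bigl(v-\tfrac{3\sqrt2}{4}\bigr)\bigl(v-\tfrac{\sqrt2}{2}\bigr),$$
which is negative precisely when $v>\tfrac{3\sqrt2}{4}$ (both factors positive). The main obstacle is the algebraic reduction from $F''(x_0)$ to the single-variable cubic $P(T)$; the factorization of $P(\sqrt2)$ shows that the threshold $\tfrac{3\sqrt2}{4}$ appearing in the hypothesis is exactly tight for this argument.
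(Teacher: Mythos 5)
Your proposal is correct. Parts (a)--(c) match the paper's argument (the paper uses endpoint limits plus the absence of zeros of $F'$, you observe directly that $v>1$ forces a definite sign of $F'$; these are equivalent). For part (d) you take a genuinely different route: the paper shows that $F'$ has a \emph{unique} zero, necessarily a global maximum, by proving that the auxiliary ratio $G(x)=\frac{\sin^2x\,(v-\sin x)}{\cos^2x\,(v-\cos x)}$ is strictly monotone (its derivative's numerator $H(x)=2(v-\sin x)(v-\cos x)+\sin x\cos x\,(1-v(\sin x+\cos x))$ never vanishes), whereas you apply the second-derivative test at an arbitrary critical point. Amusingly, after eliminating $a/b$ via the critical-point relation, your quantity $N$ is exactly $-H(x_0)$, so both proofs ultimately hinge on the same trigonometric polynomial having constant sign on $(0,\frac{\pi}{2})$. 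Where they differ is in how that sign is established: the paper bounds $H\geq I$ by replacing $\sin x+\cos x$ with $\sqrt{2}$ and then minimizes $I$ by a further calculus argument (locating the critical point at $x=\frac{\pi}{4}$ and evaluating $I(\frac{\pi}{4})=2v^2-\frac{5\sqrt2}{2}v+\frac32$); you substitute $T=\sin x+\cos x\in(1,\sqrt2]$ exactly, obtaining the cubic $P(T)=vT^3-3T^2+3vT+3-4v^2$, which is increasing (its derivative is a positive-definite quadratic for $v>1$) with $P(\sqrt2)=-4\bigl(v-\tfrac{3\sqrt2}{4}\bigr)\bigl(v-\tfrac{\sqrt2}{2}\bigr)<0$. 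Your reduction is cleaner at that final step (no auxiliary minorant $I$ is needed) and makes transparent that the threshold $\frac{3\sqrt2}{4}$ is exactly where the argument breaks; the paper's version avoids the algebraic expansion to a one-variable cubic but pays with an extra layer of calculus. Both yield the same conclusion, and your framing (all critical points are strict local maxima, plus $F\to-\infty$ at both ends) correctly rules out both local and global minima.
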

\begin{proof}
Statement (a) is immediate. Let $F'$ be the first derivative of $F$
and observe that:
$$F'(x)=a\left(\frac{\sin x-v}{\cos^2x}\right)+b\left(\frac{v-\cos x}{\sin^2x}\right)$$
If $a>0$ and $b=0$ then $\lim_{x\rightarrow 0^{+}}F(x)=a$,
$\lim_{x\rightarrow \frac{\pi}{2}^{-}}F(x)=-\infty$, and equation
$F'(x)=0$ has no solution in $(0,\frac{\pi}{2})$ because $v>1$.
Therefore, $F(x)$ is a monotone decreasing function and statement
(b) thus holds.

If $a=0$ and $b>0$ then $\lim_{x\rightarrow 0^{+}}F(x)=-\infty$,
$\lim_{x\rightarrow \frac{\pi}{2}^{-}}F(x)=b$, and equation
$F'(x)=0$ has no solution in $(0,\frac{\pi}{2})$ because $v>1$.
Therefore, $F(x)$ is a monotone increasing function and statement (c) thus
holds.

Consider $a>0$ and $b>0$. Since $\lim_{x\rightarrow
0^{+}}F(x)=\lim_{x\rightarrow \frac{\pi}{2}^{-}}F(x)=-\infty$ it
suffices to prove that equation $F'(x)=0$ has only one solution
which must be a global maximum of $F$.
Equation $F'(x)=0$ is equivalent to equation
$$G(x):=\frac{\sin^2x(v-\sin x)}{\cos^2x(v-\cos x)}=\frac{b}{a}$$
We will prove that equation $G'(x)=0$ has no solution in
$(0,\frac{\pi}{2})$, which implies that equation $G(x)=\frac{b}{a}$
has a unique solution in $(0,\frac{\pi}{2})$ because
$\lim_{x\rightarrow
0^+}G(x)=0<\frac{b}{a}<+\infty=\lim_{x\rightarrow
\frac{\pi}{2}^-}G(x)$. This will complete the proof.

It is straightforward to see that equation $G'(x)=0$ is equivalent
to equation $H(x):=0$, where $H(x)$ is equal to:
$$2(v-\sin x)(v-\cos x)+\sin x\cos x(1-v(\sin x+\cos x))$$
Consider the function $I:(0,\frac{\pi}{2})\rightarrow\mathbb{R}$ so
that
$$I(x)=2(v-\sin x)(v-\cos x)+\sin x\cos x(1-\sqrt{2}v)$$
for all $x\in(0,\frac{\pi}{2})$. Since $\sin x+\cos x\leq\sqrt{2}$
and $\sin x\cos x>0$ for all $x\in(0,\frac{\pi}{2})$, we have
$H(x)\geq I(x)$ for all $x\in(0,\frac{\pi}{2})$. We now show that
the minimum of $I(x)$ in $(0,\frac{\pi}{2})$ is greater than zero,
implying equation $H(x)=0$ has no solution.
\begin{eqnarray*}
I'(x)&=&2(-\cos x(v-\cos x)+\sin x(v-\sin x))+\\
& &(1-\sqrt{2}v)(\cos^2x-\sin^2x)\\
&=&2(\cos x-\sin x)(\cos x+\sin x-v)+\\
& &(1-\sqrt{2}v)(\cos x-\sin x)(\cos x+\sin x)\\
&=&(\cos x-\sin x)\left((3-\sqrt{2}v)(\sin x+\cos x)-2v\right)
\end{eqnarray*}
If $3-\sqrt{2}v<0$ then there is no $x\in(0,\frac{\pi}{2})$ such
that $(3-\sqrt{2}v)(\sin x+\cos x)-2v=0$ because $\sin x+\cos x$ is
positive for all $x\in(0,\frac{\pi}{2})$. Suppose
$3-\sqrt{2}v\geq0$. Then we have:
\begin{eqnarray*}
 (3-\sqrt{2}v)(\sin x+\cos x)-2v & \leq & (3-\sqrt{2}v)\sqrt{2}-2v\\
 & = & 3\sqrt{2}-4v\\
 & < & 0
\end{eqnarray*}
and thus there is no $x\in(0,\frac{\pi}{2})$ such that
$(3-\sqrt{2}v)(\sin x+\cos x)-2v=0$. Therefore, $I'(x)=0$ if and
only if $\cos x-\sin x=0$, that is, $x=\frac{\pi}{4}$. Let us prove
that $I(\frac{\pi}{4})>0$.
$$I\left(\frac{\pi}{4}\right)=2\left(v-\frac{\sqrt{2}}{2}\right)^2+\frac{1}{2}
\left(1-\sqrt{2}v\right)=2v^2-\frac{5\sqrt{2}}{2}v+\frac{3}{2}$$ The
roots of polynomial $P(x):=2x^2-\frac{5\sqrt{2}}{2}x+\frac{3}{2}$
are respectively equal to
$\frac{1}{4}\left(\frac{5\sqrt{2}}{2}-\frac{\sqrt{2}}{2}\right)$ and
$\frac{1}{4}\left(\frac{5\sqrt{2}}{2}+\frac{\sqrt{2}}{2}\right)=\frac{3\sqrt{2}}{4}$. 
Then we conclude that
$P(v)=I\left(\frac{\pi}{4}\right)>0$ because the main coefficient of
$P(x)$ is positive, $\frac{3\sqrt{2}}{4}$ is the greatest root of
$P(x)$, and $v>\frac{3\sqrt{2}}{4}$. Since $\lim_{x\rightarrow
0^{+}}I(x)=\lim_{x\rightarrow \frac{\pi}{2}^{-}}I(x)=2v(v-1)>0$ and
$I(x)>0$ at the unique extreme point $x=\frac{\pi}{4}$, then
$I(x)>0$ for all $x\in(0,\frac{\pi}{2})$. Therefore, $H(x)>0$ for
all $x\in(0,\frac{\pi}{2})$, equation $G'(x)$ has no solution in
$(0,\frac{\pi}{2})$, and then $F(x)$ has only one extreme point in
$(0,\frac{\pi}{2})$ which is a global maximum. The lemma
follows.
\end{proof}

\begin{lemma}\label{lemma:discretization-FFL-2}
If speed $v$ is greater than $\frac{3\sqrt{2}}{4}\approx
1.060660172$, then there always exists an optimal solution $(f,h)$
to the FFL-problem satisfying
Lemma~\ref{lemma:discretization-FFL}~$(a)$, that is, $h$ contains a
point of $S$ and $f$ is on a line of grid $G$.
\end{lemma}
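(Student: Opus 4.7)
The plan is to start from an arbitrary optimal $(f,h)$ and, whenever it fails (a), modify it without increasing $\Phi$ until (a) holds. By Lemma~\ref{lemma:discretization-FFL} I may assume (b) holds, so $f$ is a vertex of $G$, and by Corollary~\ref{cor:discretization-FFL} I may further assume $\varphi_v < \alpha \leq \pi/4$; otherwise (a) is already in place. The key move is to keep $f$ pinned at this grid vertex and \emph{rotate} $h$ about $f$, using Lemma~\ref{lemma:function} to show that, on each angular sub-interval on which the combinatorial partition $S_1, S_2, S_3$ stays fixed, the minimum of $\Phi$ is attained at an endpoint.

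Fix $f$ and let $[\alpha_1,\alpha_2] \subseteq (\varphi_v, \pi/4]$ be the maximal sub-interval containing the current $\alpha$ on which $S_1, S_2, S_3$ are unchanged. On this sub-interval, rewriting the $\alpha$-dependent terms of equation~(\ref{eq6}) gives
\[
\Phi(f,h_\alpha) \;=\; \frac{A}{v}\cdot\frac{1-v\sin\alpha}{\cos\alpha} \;+\; \frac{B}{v}\cdot\frac{1-v\cos\alpha}{\sin\alpha} \;+\; C,
\]
where $A = \sum_{p\in S_2} w_p|x_p-x_f| \geq 0$, $B = \sum_{p\in S_3} w_p|y_p-y_f| \geq 0$, and $C \geq 0$ is independent of $\alpha$. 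This is exactly the form of $F$ in Lemma~\ref{lemma:function} with $a=A/v$, $b=B/v$, $c=C$. Because $v > \frac{3\sqrt{2}}{4}$, Lemma~\ref{lemma:function} tells us $\Phi(f,h_\alpha)$ on $[\alpha_1,\alpha_2]$ is either constant, strictly monotone, or unimodal with an interior maximum, so in every case its minimum over the closed sub-interval is attained at $\alpha_1$ or $\alpha_2$. Moving $\alpha$ to that minimizing endpoint preserves optimality.

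It remains to classify the endpoint reached. By construction the endpoints of a sub-interval are of three types: (i) a value of $\alpha$ at which $h_\alpha$ passes through some demand point, in which case (a) holds since $f$ is still a grid vertex; (ii) the angle $\varphi_v$, where Corollary~\ref{cor:discretization-FFL} supplies an optimal solution satisfying (a); or (iii) the boundary $\alpha = \pi/4$. Case (iii) I handle by the coordinate symmetry of the problem: swapping the $x$- and $y$-axes reflects $\alpha \mapsto \pi/2-\alpha$, so rotating $h$ past $\pi/4$ in the original system is the same as rotating from $\pi/4$ downward in the swapped system, where the same rewriting of (\ref{eq6}) and the same appeal to Lemma~\ref{lemma:function} apply. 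Because there are only $O(n)$ candidate demand-point crossing angles and a single $\varphi_v$, iterating across sub-intervals in this way terminates in case (i) or (ii) after finitely many steps.

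The main obstacle I expect is making case (iii) fully rigorous: at $\alpha = \pi/4$ the partition $S_1, S_2, S_3$ can change and the form of $\Phi$ switches between the original and swapped coordinate systems, so one must verify that the two expressions agree at $\pi/4$ (continuity of $\Phi$) and that the symmetric rotation argument genuinely forces the iteration to stop at a demand-point crossing or at $\varphi_v$, rather than pinning the minimum at the seam $\pi/4$ itself.
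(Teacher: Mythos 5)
Your proposal is correct and is essentially the paper's own argument: fix $f$, rotate $h$ about it, and use Lemma~\ref{lemma:function} to push $\alpha$ to a breakpoint that is either a demand-point crossing (whereupon (a) holds, translating $f$ along $h$ if needed) or $\varphi_v$ (whereupon Corollary~\ref{cor:discretization-FFL} applies). The seam at $\pi/4$ that you flag as the main obstacle is dissolved exactly as you suggest: the paper simply lets $\alpha$ range over $[0,\frac{\pi}{2})$ from the outset, and equation~(\ref{eq6}) keeps the single form $a\left(\frac{1-v\sin\alpha}{\cos\alpha}\right)+b\left(\frac{1-v\cos\alpha}{\sin\alpha}\right)+c$ on every combinatorial piece because the partition $S_1,S_2,S_3$ does not change at $\alpha=\frac{\pi}{4}$, so $\frac{\pi}{4}$ is not a breakpoint at all and no coordinate-swap bookkeeping is needed.
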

\begin{proof}
It suffices to show that for every solution $(f,h)$, there
exists another solution $(f',h')$ satisfying
Lemma~\ref{lemma:discretization-FFL}~(a) and
$\Phi(f',h')\leq\Phi(f,h)$. The proof is as follows.

Let $(f,h)$ be a solution of the FFL-problem so that $h$
contains no demand point. Assume here angle $\alpha$ satisfies
$0\leq\alpha<\frac{\pi}{2}$. If $\alpha$ is such that
$0\leq\alpha\leq\varphi_v$, then the result follows from
Corollary~\ref{cor:discretization-FFL}. Therefore, assume
$\varphi_v<\alpha\leq\frac{\pi}{2}$.
We proceed to prove that there exists another solution
$(f',h')$ such that $h'$ contains a demand point and
$\Phi(f',h')\leq\Phi(f,h)$. Consider the sets $S_2=S_2(f,h)$ and
$S_3=S_3(f,h)$ as defined before.

Given an angle $\beta\in[\varphi_v,\frac{\pi}{2})$, let $h_{\beta}$
be the line passing through $f$ so that the angle between
$h_{\beta}$ and the positive direction of the $x$-axis is equal to
$\beta$. Observe then by equation~(\ref{eq6}) that $\Phi(f,h)$ is equal to:
$$\Phi(f,h_{\alpha})=a\left(\frac{1-v\sin\alpha}{\cos\alpha}\right)+b\left(\frac{
1-v\cos\alpha}{\sin\alpha}\right)+c$$
where $a$, $b$, and $c$ are non-negative constants that depend on
the coordinates of the demand points and the speed $v$. Then we can
argue what follows by both noting that $a>0$ (resp. $b>0$) if and only if
$S_2$ (resp. $S_3$) is not empty and using
Lemma~\ref{lemma:function}.

We can rotate $h$ with center $f$ by either increasing or decreasing
$\alpha$ to the value $\alpha'\in[\varphi_v,\frac{\pi}{2})$ in such
a way solution $(f,h_{\alpha'})$ is obtained, where either
$\alpha'=\varphi_v$ or $h_{\alpha'}$ contains a demand point of
$S_2\cup S_3$, and $\Phi(f,h_{\alpha'})\leq\Phi(f,h)$. If
$\alpha'=\varphi_v$ then the result follows from
Corollary~\ref{cor:discretization-FFL}. Otherwise, if $h_{\alpha'}$
contains a demand point of $S_2\cup S_3$, then
$(f',h')=(f,h_{\alpha'})$ is the desired solution, and to
finalize the proof, we translate $f'$ along $h'$, if necessary, as
was done in the proof of Lemma~\ref{lemma:discretization-FFL}, in
order to ensure $f'$ belongs to a line of grid $G$. The result thus
follows.
\end{proof}

\section{Examples}\label{section:experimental}

In Fig.~\ref{fig:experiment1} and Fig.~\ref{fig:experiment2} we show
the same example, consisting of nine demand points
$p_1,p_2,\ldots,p_9$. Each demand point is represented by a solid
dot, and labeled with a triple, the first two components are the
coordinates and the third component is its weight. Facility point
$f$ is represented by a cross. In both examples optimal solutions
satisfy Lemma~\ref{lemma:discretization-FFL}~(a), that is, highway
contains a demand point and facility point belongs to a line of grid
$G$. As expected, when we increase the highway's speed from the
example in Fig.~\ref{fig:experiment1} to the one in
Fig.~\ref{fig:experiment2}, the shortest paths to the facility
point change according to the claims of
Lemma~\ref{lemma:way-of-move}.
\begin{figure}[h]
  \centering
  \includegraphics[width=10cm]{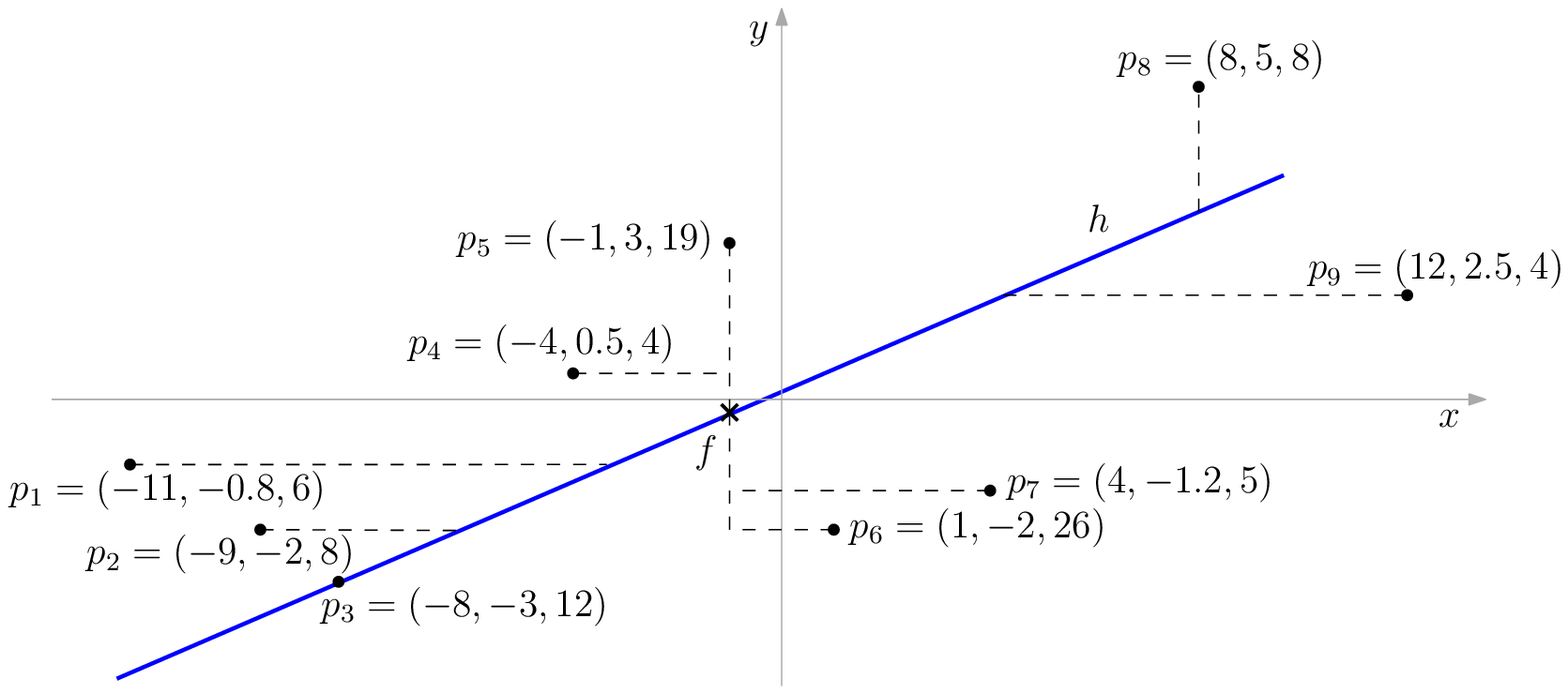}
  \caption{\small{The highway contains a demand point and facility point belongs
  to a line of grid $G$. Some points perform an horizontal movement to reach the
highway.}}
  \label{fig:experiment1}
\end{figure}

\begin{figure}[h]
  \centering
  \includegraphics[width=10cm]{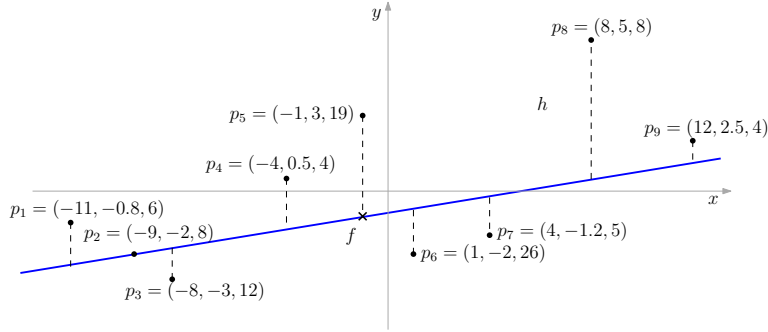}
  \caption{\small{The highway contains a demand point and facility point belongs
  to a line of grid $G$. All points perform only a vertical movement to reach the
highway.}}
  \label{fig:experiment2}
\end{figure}

In Fig.~\ref{fig:experiment1} speed $v$ is equal to $1.2$ and
highway $h$ of the optimal solution contains point $p_3$, facility
point $f=(-1,-0.14335)$ is on the vertical line passing through
$p_5$, and there are some demand points moving horizontally to reach
facility point $f$. The value of the objective function $\Phi(f,h)$
is equal to $525.83$.

In Fig.~\ref{fig:experiment2} speed $v$ is equal to $1.5$, highway
$h$ of the optimal solution contains point $p_2$, facility
$f=(-1,-0.7758)$ is on the vertical line containing $p_5$, and all
demand points move vertically only to reach the highway. Since speed
is greater than speed in Fig.~\ref{fig:experiment1}, the value of
the objective function $\Phi(f,h)$ reduces to $471.55$.

In Fig.~\ref{fig:experiment3} we present a different example
consisting of nine demand points $p_1,\ldots,p_9$, with the aim of
showing the existence of configurations for which optimal solutions
satisfy only Lemma~\ref{lemma:discretization-FFL}~(b). In this
example speed $v$ is equal to $1.04<\frac{3\sqrt{2}}{4}$. Highway
$h$ of the optimal solution contains no demand point and facility
$f=(0,0)$ is on a vertex of grid $G$, in fact, it is located on both
the horizontal line through $p_7$ and the vertical line through
$p_6$. The value of $\Phi(f,h)$ is equal to $336.2$.

\begin{figure}[h]
  \centering
  \includegraphics[width=9cm]{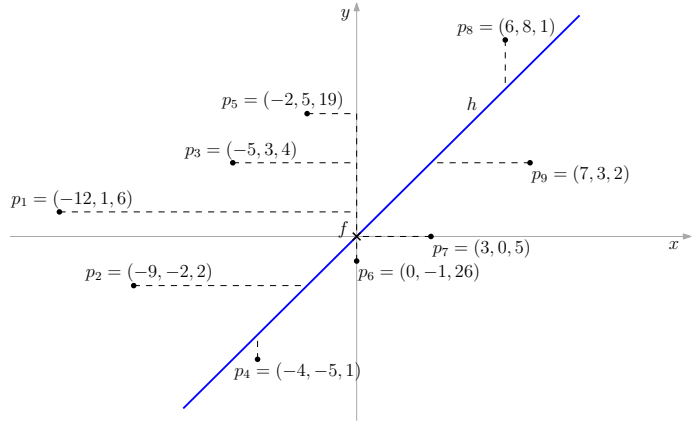}
  \caption{\small{The highway contains no demand point and facility point
  is a vertex of grid $G$.}}
  \label{fig:experiment3}
\end{figure}

\section{Conclusions and further research}\label{section:conclu}

We have solved in $O(n^3)$ time the problem of locating at the same
time a facility point and a freeway of variable
length,
among a set of demand points, in order to minimize the total
weighted travel time from the demand points to the facility.
Some examples are presented to show that there exist optimal
solutions corresponding to each type of solutions that our
algorithm considers.

A natural restriction to be considered in further research
of this problem is to upper bound the length of the highway,
that is, to consider that highway has fixed length.
In this case, there also exist optimal solutions in which the facility
point belongs to the highway. This was in fact showed in
Proposition~\ref{prop:fac-in-highway} because in the proof
we did not change the lenght of the highway.
It is not hard to see that when the highway's length is fixed,
the shortest paths from the demand points to the facility point can be
discretized as follows. If $0\leq\alpha\leq\varphi_v$, then we distinguish
three regions $R_1$, $R_2$, and $R_3$ as depicted in
Fig.~\ref{fig:disc-fixed-len}~a). Points belonging to $R_1\cup R_3$
move to the nearest endpoint of $h$,
and points of $R_2$ move vertically to $h$. Otherwise,
if $\varphi_v<\alpha\leq\frac{\pi}{4}$, then eight regions
$R_1,\ldots,R_8$ can be identified as 
shown in Fig.~\ref{fig:disc-fixed-len}~b). Points of
$R_1\cup R_2$ move to the nearest endpoint of $h$, 
points of $R_3\cup R_4$ move directly to $f$, points of $R_5\cup R_6$
move horizontally to $h$, and points of $R_7\cup R_8$ move vertically
to $h$. 
\begin{figure}[h]
  \centering
  \includegraphics[width=15cm]{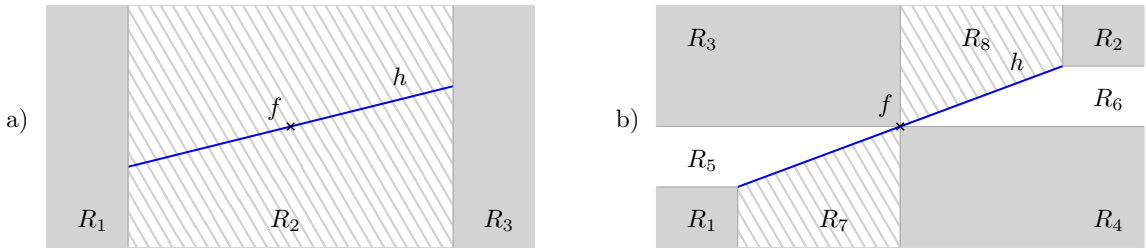}
  \caption{\small{Discretization when the highway's length is fixed.}}
  \label{fig:disc-fixed-len}
\end{figure}

We believe that with the above discretization, the search space of
optimal solutions can be simplified by using a similar (and more detailed)
statement as Lemma~\ref{lemma:discretization-FFL}. This will permit
to obtain an algorithm similar to the one presented 
in Theorem~\ref{theorem:free/var/sum}.

Other variant to be considered in further research is the problem of locating at the
same time the facility point and a turnpike of variable length. This will
extend~\cite{diaz-banez11,espejo11}.

\section*{Acknowledgement}

Authors D\'iaz-B\'a\~nez and Ventura were partially supported by 
project FEDER MEC MTM2009-08652 and ESF EUROCORES programme 
EuroGIGA - ComPoSe IP04 - MICINN 
Project EUI-EURC-2011-4306. 
Korman had the support of the Secretary for 
Universities and Research of the Ministry of Economy and Knowledge of the 
Government of Catalonia and the European Union.
P\'erez-Lantero was partially supported by project FEDER MEC MTM2009-08652 and
grant FONDECYT 11110069.

\nocite{*}

\small

\bibliographystyle{plain}
\bibliography{freeway}

\end{document}